\def\rar{\rightarrow}
\def\lrar{\leftrightarrow}
\def\beq{\begin{equation}}
\def\eeq#1{\label{#1}\end{equation}}
\def\ba{\begin{array}}
\def\ea{\end{array}}
\def\i#1{\hbox{\it #1\/}}
\def\is#1{{\hbox{\scriptsize {\it #1\/}}}}
\def\eq{\i{Eq}}
\def\eqs{\is{Eq}}
\def\sm{\hbox{\rm SM}}
\def\S1eq{\sim_{S_1}}
\def\i#1{\hbox{\itshape #1\/}}
\newtheorem{thm}{Theorem}
\title[Relational Theories with Null Values]{\bf Relational Theories
with Null Values\\ and Non-Herbrand Stable Models}
 \author[ Vladimir Lifschitz, Karl Pichotta, and  Fangkai Yang]
 {Vladimir Lifschitz, Karl Pichotta, and  Fangkai Yang\\
 Department of Computer Science\\
 University of Texas at Austin\\
 \{vl,pichotta,fkyang\}@cs.utexas.edu
}
\begin{document}

\date{}
\maketitle

\begin{abstract}
Generalized relational theories with null values in the sense of Reiter
are first-order theories that provide a semantics for relational databases
with incomplete information.  In this paper we show that any such theory
can be turned into an equivalent logic program, so that models of the theory
can be generated
using computational methods of answer set programming.  As a step towards this
goal, we develop a general method for calculating stable models under the
domain closure assumption but without the unique name assumption.
\end{abstract}

\section{Introduction}

We re-examine here some of the problems discussed in two important
papers on the semantics of null values that were published many years ago.
The first of them is Ray Reiter's paper ``Towards a
logical reconstruction of relational database theory'' \cite{rei84}.
Generalized relational theories with null values in the sense of Reiter
are first-order theories that provide a semantics for relational databases
with incomplete information.  The incompleteness can be of two kinds.  One is
represented by inclusive disjunction; for instance, the formula
\beq
\i{SUPPLIES}(\i{Foo}, p_1) \vee \i{SUPPLIES}(\i{Foo}, p_3)
\eeq{dis}
says: \i{Foo} supplies~$p_1$ or~$p_3$, maybe both.  The other is represented by
null values; by writing
\beq
\i{SUPPLIER}(\omega),\ \i{SUPPLIES}(\omega,p_3),
\eeq{nvexample}
where $\omega$ is a null value, we express that some supplier, which may or
may not already be in the database, supplies~$p_3$.

The second paper, by Bonnie Traylor and Michael Gelfond, is entitled
``Representing null values in logic programming'' \cite{tra94}.  The authors
define, among other things, the ``logic programming counterpart'' of a
generalized relational theory with null values---a logic program whose
meaning under the answer set semantics is similar to the meaning of the
theory under the standard semantics of first-order logic.

We propose here an alternative approach to turning Reiter's theories into logic
programs, which represents the meaning of the
theory more closely than the translation from \cite{tra94}.  We show also
how these logic programs can be executed using computational
methods of answer set programming (ASP) \cite{mar99,nie99,lif08}---for
instance, by running the answer set solver {\sc clingo}.\footnote{\tt
http://potassco.sourceforge.net \label{ft}}

The difference between null values and other object constants emphasized in
Reiter's semantics is that null values are exempt from the unique name
assumption: a null value may represent an object that has a name in the
database, and two different null values may represent the same object.  This
fact leads us to the general problem of using answer set solvers
for calculating the stable models that satisfy the domain closure assumption
but may not satisfy the unique name assumption.  Such models are allowed in
some versions of the stable model semantics \cite{fer07a,fer09}, just as they
are allowed in the definition of circumscription \cite{mcc80,mcc86}.
But existing answer set solvers do not deal with stable models of this kind
directly.  To take a simple example, the formula
\beq
P(a)\lor P(b)
\eeq{easy}
has minimal models of three kinds: in some of them,~$P(a)$ is true, and~$P(b)$
is false; in others,~$P(a)$ is false, and~$P(b)$ is true; finally, there are
minimal
models in which both~$P(a)$ and~$P(b)$ are true, along with the formula $a=b$.
We will see how syntactic expressions describing these three possibilities
can be generated by an answer set solver.  Our method is applicable, in
particular, to logic programs representing relational theories with null values.


The word ``generalized'' in Reiter's terminology indicates the possibility of
including disjunctive information, and in the rest of the paper it will be
omitted.

\section{Relational Theories without Null Values} 

\subsection{Review of Reiter's Semantics of Relational Theories} 

We begin with a signature that consists of finitely many object and predicate
constants.  A {\sl positive ground clause} is a formula of the form
$A_1\lor\cdots\lor A_r$ ($r\geq 1$), where each~$A_i$ is a ground atomic
formula whose predicate is distinct from the equality symbol.  For
instance,~(\ref{dis}) is a positive ground clause.  For any finite
set~$\Delta$ of positive ground clauses, the corresponding {\sl relational
theory} $T_\Delta$ is the set consisting of the following sentences:
\begin{itemize}
\item
 the {\sl domain closure axiom} \i{DCA\/}:
$$
\forall x \bigvee_a x=a
$$
  where the disjunction extends over all object constants~$a$;
\item
  the {\sl unique name axioms} $a\neq b$ for all pairs of distinct object
constants~$a$,~$b$;
  \item
  the clauses $\Delta$;
  \item
  for each predicate constant~$P$, the {\sl completion axiom}
\beq
  \forall {\bf x} \left\lbrack P({\bf x}) \rightarrow
           \bigvee_{{\bf a}\in W_P} {\bf x} = {\bf a}\right\rbrack
\eeq{comp}
  where {\bf x} is a tuple of distinct object variables, and $W_P$
  is the set of all tuples~{\bf a} of object constants such that
  $P({\bf a})$ belongs to a clause from~$\Delta$.\footnote{The equality between
two tuples of terms of the same length, such as ${\bf x}={\bf a}$, stands
for the conjunction of the equalities between the corresponding members of the
tuples. We do not include equality axioms from \cite{rei84} because we assume
here the version of the semantics of first-order formulas that treats
equality as identity (see, for instance, \cite[Section~1.2.2]{lif08b}).}
\end{itemize}

In view of the domain closure axiom \i{DCA} and the unique name axioms, any
model
of~$T_\Delta$ is isomorphic to a Herbrand model.\footnote{Recall that in
the absence of function constants of arity $>0$ a {\sl Herbrand
interpretation} is an interpretation such that (i)~its universe is the set of
all object constants, and (ii)~each object constant is interpreted as itself.
A Herbrand interpretation can be identified with the set of all ground atomic
formulas that are true in it and whose predicate is distinct from the
equality symbol.}  Consequently, in the discussion of models of ~$T_\Delta$
we can restrict attention to Herbrand models.

Consider, for instance, Example 4.1 from \cite{rei84}.  Its signature includes
the object constants
$$p_1,\ p_2,\ p_3,\ \i{Acme},\ \i{Foo},$$
the unary predicate constants
$$\i{PART},\ \i{SUPPLIER},$$
and the binary predicate constants
$$\i{SUPPLIES},\ \i{SUBPART}.$$
The set~$\Delta$ describes the following supplier and parts world:
\begin{center}
\begin{tabular}{cccc}
{\it PART} & {\it SUPPLIER} & {\it SUPPLIES} & {\it SUBPART} \\
$p_1$ & \i{Acme} & $\langle$\i{Acme}$\,\,\,\,p_1\rangle$ & $\langle p_1\,\,\,\,p_2 \rangle$ \\
$p_2$ & \i{Foo} & $\langle$\i{Foo}$\,\,\,\,p_2\rangle$ &\\
$p_3$
\end{tabular}
\end{center}
In other words, it includes the corresponding atomic formulas:
\beq
\i{PART}(p_1),\ \i{PART}(p_2),\ \dots,\ \i{SUBPART}(p_1,p_2).
\eeq{common}
In addition, $\Delta$ includes clause~(\ref{dis}).

The completion axioms in this example are
$$
\ba l
\forall x(\i{PART}(x) \rar x=p_1\lor x=p_2\lor x=p_3),\\
\forall x(\i{SUPPLIER}(x) \rar x=\i{Acme}\lor x=\i{Foo}),\\
\forall xy(\i{SUPPLIES}(x,y) \rar
   (x=\i{Acme}\land y= p_1) \\
   \qquad\qquad\qquad\qquad\qquad\quad \lor\; (x=\i{Foo} \land y = p_2)\\
   \qquad\qquad\qquad\qquad\qquad\quad \lor\; (x=\i{Foo} \land y = p_1)\\
     \qquad\qquad\qquad\qquad\qquad\quad \lor\; (x=\i{Foo} \land y = p_3)),\\
\forall xy(\i{SUBPART}(x,y) \rar (x=p_1 \land y = p_2)).
\ea
$$
Theory~$T_\Delta$ has 3 Herbrand models:
$$\ba l
I_1=I\cup\{\i{SUPPLIES}(\i{Foo}, p_1)\},\\
I_2=I\cup\{\i{SUPPLIES}(\i{Foo}, p_3)\},\\
I_3=I\cup\{\i{SUPPLIES}(\i{Foo}, p_1),\i{SUPPLIES}(\i{Foo}, p_3)\},
\ea$$
where~$I$ is the set of atomic formulas~(\ref{common}).

Note that~$I_3$ is not a minimal model of~$T_\Delta$: both~$I_1$ and~$I_2$ are
proper subsets of~$I_3$.  In the presence of disjunction, Reiter's completion
axioms~(\ref{comp}) guarantee only a weak form of minimality.  A similar
condition is used in the definition of the possible model semantics of
disjunctive logic programs \cite{sak94}.

\subsection{Representing Relational Theories by Logic Programs}
\label{ssec:trans1}

For any set~$\Delta$ of positive ground clauses, by $\Pi_\Delta$ we denote
the set of rules
\beq
1\{A_1,\dots,A_r\}
\eeq{pdrule}
for all clauses $A_1\lor\cdots\lor A_r$ from~$\Delta$.  Recall that this is an
expression in the input language of {\sc clingo}\footnote{Such expressions,
``cardinality constraints,''
appeared originally as part of the input language of the grounder
{\sc lparse} ({\tt http://www.tcs.hut.fi/Software/smodels/lparse.ps.gz}).}
that allows us to decide arbitrarily whether or not to include the atomic
formulas~$A_1,\dots,A_r$ in the answer set as long as at least one of them is
included.

The translation $1\{A\}$ of a unit clause $A$ is strongly equivalent
\cite{lif01,lif07a} to the fact~$A$.  Using this simplification we can say, for
instance, that the logic program representing the example above
consists of the facts~(\ref{common}) and the rule
$$1\{\i{SUPPLIES}(\i{Foo}, p_1),\i{SUPPLIES}(\i{Foo}, p_3)\}.$$
Furthermore, this program can be made more compact using the {\sc clingo}
conventions that allow us to use semicolons to merge a group of facts into one
expression:
\begin{verbatim}
part(p1;p2;p3).
supplier(acme;foo).
supplies(acme,p1;;foo,p2).
subpart(p1,p2).
1{supplies(foo,p1),supplies(foo,p3)}.
\end{verbatim}
Given this input, {\sc clingo} returns 3 answer sets:
\begin{verbatim}
Answer: 1
part(p1) part(p2) part(p3) supplier(acme) 
supplier(foo) supplies(foo,p2) supplies(acme,p1) 
subpart(p1,p2) supplies(foo,p3)
Answer: 2
part(p1) part(p2) part(p3) supplier(acme) 
supplier(foo) supplies(foo,p2) supplies(acme,p1) 
subpart(p1,p2) supplies(foo,p1)
Answer: 3
part(p1) part(p2) part(p3) supplier(acme) 
supplier(foo)supplies(foo,p2) supplies(acme,p1) 
subpart(p1,p2) supplies(foo,p1) supplies(foo,p3)
\end{verbatim}

These answer sets are identical to the Herbrand models of the corresponding
relational theory.  This is an instance of a general theorem
that expresses the correctness of our translation:

\begin{thm}\label{thm1}
For any set~$\Delta$ of positive ground clauses, a Herbrand
interpretation~$I$ is a model of~$T_\Delta$ iff $I$ is an answer set
of~$\Pi_\Delta$.
\end{thm}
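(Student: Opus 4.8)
The plan is to reduce both sides of the biconditional to the same elementary combinatorial condition on a Herbrand interpretation~$I$, namely: (i)~every ground atom belonging to~$I$ occurs in some clause of~$\Delta$, and (ii)~$I$ contains at least one disjunct of every clause of~$\Delta$. Once I show that $I\models T_\Delta$ is equivalent to (i)$\wedge$(ii) and, separately, that ``$I$ is an answer set of~$\Pi_\Delta$'' is equivalent to the \emph{same} conjunction, the theorem follows immediately.

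First I would analyze the models of~$T_\Delta$. Since $I$ is Herbrand, its universe is the set of object constants and each constant denotes itself, so the domain closure axiom and all the unique name axioms hold automatically and impose no condition on~$I$. Reading the completion axiom for a predicate~$P$ over~$I$, it asserts that whenever $P(\mathbf{t})$ is true the tuple~$\mathbf{t}$ lies in~$W_P$; by the definition of~$W_P$ this says precisely that every $P$-atom of~$I$ occurs in a clause of~$\Delta$, which, ranging over all~$P$, is exactly condition~(i). Satisfaction of the clauses~$\Delta$ themselves is condition~(ii). Hence $I\models T_\Delta$ iff (i) and (ii) hold.

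Next I would analyze the answer sets of~$\Pi_\Delta$ under the stable-model semantics of \cite{fer07a,fer09}, reading each rule $1\{A_1,\dots,A_r\}$ as the propositional formula $(A_1\vee\cdots\vee A_r)\wedge\bigwedge_{i=1}^{r}(A_i\vee\neg A_i)$ (consistent with the strong equivalence of $1\{A\}$ and $A$ noted above \cite{lif01,lif07a}). Any answer set is in particular a model of~$\Pi_\Delta$, so it satisfies every rule, i.e.\ condition~(ii); condition~(i) will emerge from minimality. The heart of the argument is the reduct: for an~$I$ satisfying the rules, the reduct of each rule relative to~$I$ collapses to the conjunction of the positive literals~$A_i$ with $A_i\in I$ and the negative literals~$\neg A_i$ with $A_i\notin I$. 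Consequently the reduct of~$\Pi_\Delta$ pins down, according to~$I$, the truth value of every atom occurring in~$\Delta$ and constrains nothing else, so its unique minimal model is the set of those atoms of~$\Delta$ that belong to~$I$. Requiring $I$ to coincide with this minimal model forces condition~(i) (no atom of~$I$ may lie outside~$\Delta$), and conversely, once (i) and (ii) hold, $I$ is exactly that minimal model. Therefore $I$ is an answer set of~$\Pi_\Delta$ iff (i) and (ii) hold, and combining this with the first part completes the proof.

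The main obstacle is the answer-set direction: making the semantics of the cardinality-constraint rules precise and carrying out the reduct and minimality computation carefully. In particular one must verify that choice-style rules create no spurious minimization that would wrongly exclude the ``large'' solutions---such as~$I_3$ in the worked example, in which both disjuncts of~(\ref{dis}) are true---and that atoms not occurring in~$\Delta$ are correctly forced out; getting the reduct of $1\{A_1,\dots,A_r\}$ exactly right is where the delicacy lies.
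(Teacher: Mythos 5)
Your argument is correct, but it follows a genuinely different route from the paper's. You stay entirely at the ground, Herbrand level: both sides of the biconditional are reduced to your combinatorial conditions (i) and (ii), with the answer-set side handled by computing the reduct of each rule $1\{A_1,\dots,A_r\}$ and checking minimality by hand. The paper instead proves a formula-level equivalence (its Lemma~\ref{lem1}): $\sm_{\bf p}[\Pi_\Delta]$ is equivalent, in \emph{arbitrary} interpretations, to the conjunction of the clauses~$\Delta$ with the completion axioms~(\ref{comp}). That equivalence is established not by a reduct computation but by general properties of the operator $\sm_{\bf p}$ from \cite{fer09}: $\Pi_\Delta$ is strongly equivalent to choice formulas $P({\bf a})\lor\neg P({\bf a})$ together with constraints $\neg\bigwedge_i\neg A_i$; the constraints can be detached from the scope of $\sm_{\bf p}$ (Theorem~3 of \cite{fer09}); and $\sm_{\bf p}$ of the choice part is evaluated by the completion theorem (Theorem~11 of \cite{fer09}). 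Theorem~\ref{thm1} then follows in two lines, since on Herbrand interpretations \i{DCA} and the unique name axioms hold automatically---indeed, your conditions (i) and (ii) are exactly the Herbrand specialization of Lemma~\ref{lem1}. The trade-off is this: because the paper's lemma is an equivalence of formulas valid in all interpretations, it is reused verbatim to prove Theorem~\ref{thm2} about non-Herbrand stable models (hence the paper's ``combined proof'' of Theorems~1 and~2), whereas your reduct-and-minimality argument is intrinsically tied to Herbrand interpretations of ground formulas and yields Theorem~\ref{thm1} only; in exchange, your proof needs none of the machinery of \cite{fer09} and is more elementary. Two details to tighten if you write it up: (a)~state explicitly that for ground formulas and Herbrand interpretations the reduct-based definition of answer set you use coincides with the $\sm_{\bf p}$-based definition in which the theorem is stated---that equivalence is in the literature, but it is a bridge your proof crosses silently; (b)~under Ferraris' reduct a satisfied subformula $\neg A_i$ (with $A_i\notin I$) reduces to a tautology, not to the literal $\neg A_i$, so the reduct of a rule is equivalent to $\bigwedge_{A_i\in I}A_i$ rather than to your conjunction of literals; this does not affect your conclusion, since either version has the set of atoms of~$\Delta$ belonging to~$I$ as its unique minimal model.
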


Proofs of theorems, including a combined proof of Theorems~1 and~2, can be
found at the end of the paper.

\section{Null Values} 

\subsection{Review of Reiter's Semantics of Null Values} 

We turn now to a more general framework.  As before, the
underlying signature is assumed to consist of finitely many object and
predicate constants.  We assume that the object constants are classified into
two groups, {\sl the database constants} and {\sl the null values}.  About a
unique name axiom $a\neq b$ we say that it is {\sl required} if both~$a$
and~$b$ are database constants, and that it is {\sl optional} otherwise.
As before, $\Delta$ stands for a finite
set of positive ground clauses.  Let~$\Sigma$ be a set of optional unique
name axioms.  The {\sl relational theory with null values} $T_{\Delta,\Sigma}$
is the set of sentences obtained from $T_\Delta$ by removing all optional
unique name axioms that do not belong to~$\Sigma$.  In other words,
$T_{\Delta,\Sigma}$ consists of
\begin{itemize}
\item
 the domain closure axiom \i{DCA},
\item
all required unique name axioms,
\item
the optional unique name axioms from~$\Sigma$,
\item
  the clauses $\Delta$;
\item
  the completion axioms~(\ref{comp}).
\end{itemize}

Consider, for instance, the modification of our example in which
\begin{itemize}
\item
the object constant~$\omega$ is added to signature as the only null value,
\item
clause~(\ref{dis}) is replaced in~$\Delta$ with clauses~(\ref{nvexample}),
and
\item
$\Sigma=\{\omega\neq p_1,\omega\neq p_2,\omega\neq p_3\}$.
\end{itemize}
Thus $\omega$ is assumed to be a supplier that supplies part~$p_3$; it may
be identical to one of the suppliers \i{Acme}, \i{Foo} or may be different
from both of them, and it is certainly different from~$p_1$,~$p_2$,~$p_3$.
The completion axioms in this example are
$$
\ba l
\forall x(\i{PART}(x) \rar x=p_1\lor x=p_2\lor x=p_3),\\
\forall x(\i{SUPPLIER}(x) \rar x=\i{Acme}\lor x=\i{Foo}\lor x=\omega),\\
\forall xy(\i{SUPPLIES}(x,y) \rar
   (x=\i{Acme}\land y= p_1) \\
   \qquad\qquad\qquad\qquad\qquad\quad \lor\;(x=\i{Foo} \land y = p_2)\\
   \qquad\qquad\qquad\qquad\qquad\quad \lor\;(x=\omega \land y = p_3)),\\
\forall xy(\i{SUBPART}(x,y) \rar (x=p_1 \land y = p_2)).
\ea
$$
The set of unique name axioms of~$T_{\Delta,\Sigma}$ includes neither
\hbox{$\omega\neq\i{Acme}$} nor \hbox{$\omega\neq\i{Foo}$}.  Accordingly, this
theory has models of different kinds: some of them satisfy
\hbox{$\omega=\i{Acme}$}; some satisfy \hbox{$\omega=\i{Foo}$}; in some models,
both equalities are false.  We will later return to this example to give
a complete description of its models.

\subsection{Representing Theories with Null Values by Logic Programs}

In Section~\ref{ssec:trans1} we saw how Reiter's semantics of disjunctive
databases can be reformulated in terms of stable models.  Our next goal is to
do the same for databases with null values.

Since the axiom set~$T_{\Delta,\Sigma}$ may not include some of the optional
unique name axioms, it may have models that are not isomorphic to any Herbrand
model.  For this reason, the problem of relating~$T_{\Delta,\Sigma}$ to logic
programs becomes easier if we start with a semantics of logic programs that
is not restricted to Herbrand models.

A version of the stable model semantics that covers non-Herbrand models is
described in \cite[Section~2]{fer09}.\footnote{Other possible approaches to
the semantics of logic programs that are
not limited to Herbrand models use program completion \cite{cla78}
without Clark's equality axioms and the logic of nonmonotone inductive
definitions \cite{den08}.}  That
paper deals with models of a first-order
sentence and defines under what conditions such a model is considered stable
relative to a subset~{\bf p} of the predicate constants of the underlying
signature.  The predicates from~{\bf p} are called ``intensional.''  Unless
stated otherwise, we will assume that {\bf p} consists of all predicate
constants of the underlying signature, so that every predicate constant (other
than equality) is considered intensional.
When this definition of a stable model is applied to a logic program, each rule
of the program is viewed as shorthand for a first-order sentence, and the
program is identified with the conjunction of these sentences.  For instance,
rule~(\ref{pdrule}) can be viewed as shorthand for the formula
$$\bigwedge_{i=1}^r(A_i\lor \neg A_i)\wedge\bigvee_{i=1}^r A_i.$$
(The first conjunctive term says, ``choose the truth value of each~$A_i$
arbitrarily''; the second term requires that at least one of these atoms be
made true.)

The paper referenced above defines a syntactic transformation $\sm_{\bf p}$ that
turns a first-order sentence~$F$ into a conjunction
$$F\land\,\cdots$$
where the dots stand for a second-order sentence (the ``stability condition'').
The stable models of~$F$ are defined as arbitrary models (in the sense of
second-order logic) of $\sm_{\bf p}[F]$.

From this perspective, Theorem~\ref{thm1} asserts that a Herbrand
interpretation is a model of~$T_\Delta$ iff it is a model
of~$\sm_{\bf p}[\Pi_\Delta]$, where~{\bf p} is the set of all predicate
constants of the underlying signature.

By~$\Pi_{\Delta,\Sigma}$ we denote the conjunction of~$\Pi_\Delta$ with \i{DCA}
and with all unique name axioms from
$T_{\Delta,\Sigma}$ (that is to say, with all unique name axioms except for
the optional axioms that do not belong to~$\Sigma$).  The following theorem
expresses the soundness of this translation:

\begin{thm}\label{thm2}
For any set~$\Delta$ of positive ground clauses and any set~$\Sigma$ of optional
unique name axioms, $T_{\Delta,\Sigma}$ is equivalent to
$\sm_{\bf p}[\Pi_{\Delta,\Sigma}]$, where~{\bf p} is the set of all predicate
constants.
\end{thm}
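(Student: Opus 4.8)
The plan is to unfold the definition of $\sm_{\bf p}$ from \cite{fer09}, namely $\sm_{\bf p}[F] = F\wedge\neg\exists{\bf u}({\bf u}<{\bf p}\wedge F^*({\bf u}))$, and to show that the second-order stability conjunct is classically equivalent, over the interpretations that satisfy the rest of $\Pi_{\Delta,\Sigma}$, to the completion axioms~(\ref{comp}). First I would separate the shared part of the two theories. Read classically, a choice rule $1\{A_1,\dots,A_r\}$, i.e. $\bigwedge_i(A_i\lor\neg A_i)\wedge\bigvee_i A_i$, is equivalent to the clause $A_1\lor\cdots\lor A_r$; hence an interpretation satisfies $\Pi_{\Delta,\Sigma}$ iff it satisfies \i{DCA}, the unique name axioms, and $\Delta$, which are precisely the non-completion sentences of $T_{\Delta,\Sigma}$. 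Thus the first conjunct $F=\Pi_{\Delta,\Sigma}$ of $\sm_{\bf p}[F]$ already captures everything in $T_{\Delta,\Sigma}$ except~(\ref{comp}), and the work reduces to matching the stability conjunct against the completion axioms for interpretations $I$ satisfying \i{DCA}, the unique name axioms, and $\Delta$.

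Next I would compute the transform $F^*$. Since \i{DCA} and the unique name axioms contain no intensional predicate, they are unchanged by $*$ and, being already true in $I$, may be dropped from the reduct; so only $(\Pi_\Delta)^*$ matters. Applying the recursive clauses of $*$ to a choice rule yields $(A_i\lor\neg A_i)^*\equiv A_i\to\hat A_i$ and $(\bigvee_i A_i)^*=\bigvee_i\hat A_i$, where $\hat A_i$ is $A_i$ with its predicate $P$ replaced by the predicate variable $u_P$. Consequently a tuple ${\bf u}\le{\bf p}$ satisfies $(\Pi_\Delta)^*$ in $I$ exactly when (a)~$u_P$ contains ${\bf a}^I$ for every atom $P({\bf a})$ occurring in $\Delta$ that is true in $I$, and (b)~each clause of $\Delta$ has a member whose hatted atom is satisfied by ${\bf u}$.

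The core of the argument is then to compare with the smallest candidate ${\bf u}^{\min}$ given by $u^{\min}_P=\{{\bf a}^I : P({\bf a})\hbox{ occurs in }\Delta,\ {\bf a}^I\in p_P^I\}$. Condition~(a) forces every reduct-satisfying ${\bf u}$ to include ${\bf u}^{\min}$, and since $I\models\Delta$, condition~(b) already holds for ${\bf u}^{\min}$ itself, so ${\bf u}^{\min}$ satisfies $(\Pi_\Delta)^*$. If $I$ violates~(\ref{comp}) for some $P$, there is a tuple ${\bf c}\in p_P^I$ differing from ${\bf a}^I$ for every ${\bf a}\in W_P$; this ${\bf c}$ lies outside $u^{\min}_P$, so ${\bf u}^{\min}<{\bf p}$ witnesses the failure of stability. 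Conversely, any reducing ${\bf u}<{\bf p}$ omits some ${\bf c}\in p_P^I$, and by~(a) that ${\bf c}$ cannot equal any ${\bf a}^I$ with ${\bf a}\in W_P$, so~(\ref{comp}) fails at ${\bf c}$. Hence stability holds in $I$ iff all completion axioms hold in $I$, and the two theories have the same models. Theorem~\ref{thm1} follows as the special case with no null values, where every model is (isomorphic to) a Herbrand one and, by \cite{fer09}, the Herbrand models of $\sm_{\bf p}[\Pi_\Delta]$ are exactly the answer sets of $\Pi_\Delta$.

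The main obstacle is the non-Herbrand setting. Because optional unique name axioms may be missing, distinct object constants can denote the same element and $p_P^I$ may contain tuples named by no constant, so the reasoning must be carried out in terms of the denotations ${\bf a}^I$ and arbitrary domain tuples ${\bf c}$ rather than syntactic ground atoms. The decisive observation is that condition~(a) of the reduct pins $u_P$ down to exactly $u^{\min}_P$, and that the tuples one is thereby forbidden to remove from $p_P^I$ are precisely the ones whose existence~(\ref{comp}) prohibits; recognizing this correspondence is what links the second-order stability condition to the first-order completion.
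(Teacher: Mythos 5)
Your proof is correct, but it takes a genuinely different route from the paper's. The paper first establishes a standalone lemma (Lemma~\ref{lem1}): $\sm_{\bf p}[\Pi_\Delta]$ is equivalent to the conjunction of~$\Delta$ with the completion axioms~(\ref{comp}). That lemma is proved not by unfolding the second-order definition but by strong-equivalence rewriting (each choice rule is split into a constraint plus formulas strongly equivalent to $\neg\neg P({\bf a})\rar P({\bf a})$) followed by appeals to two general results of \cite{fer09}: Theorem~3, which lets the constraints be moved outside the $\sm$ operator, and Theorem~11, the completion theorem, which turns $\sm_{\bf p}$ of the remaining part into Clark-style completion. Theorem~\ref{thm2} then follows in a few lines by writing $\Pi_{\Delta,\Sigma}$ as $\Pi_\Delta\land\i{DCA}\land U$ and pulling out $\i{DCA}\land U$ via the property that $\sm_{\bf p}[F\land G]$ is equivalent to $\sm_{\bf p}[F]\land G$ when $G$ contains no intensional predicates. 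You instead work from first principles: you unfold the definition of $\sm_{\bf p}$, compute the reduct formula of the choice rules, and exhibit the minimal witness ${\bf u}^{\min}$; in effect you re-prove, for this particular class of programs, exactly the instance of the completion theorem that the paper cites. What your approach buys is self-containedness and explicit care with the non-Herbrand subtleties (arguing with denotations ${\bf a}^I$ and arbitrary domain elements rather than ground atoms); what the paper's buys is brevity and a reusable intermediate result, since Lemma~\ref{lem1} also yields Theorem~\ref{thm1} immediately. One half-line of your converse direction needs repair: from $c\notin u_P$, condition~(a) rules out $c={\bf a}^I$ only for those ${\bf a}\in W_P$ whose atoms $P({\bf a})$ are true in~$I$; for ${\bf a}\in W_P$ with ${\bf a}^I\notin p_P^I$ you need the separate (trivial) observation that $c\in p_P^I$ already forces $c\neq{\bf a}^I$. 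With that sentence added, your argument is complete.
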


In other words, an interpretation~$I$ is a model
of $T_{\Delta,\Sigma}$ iff~$I$ is a stable model of~$\Pi_{\Delta,\Sigma}$.

One useful property of the operator $\sm_{\bf p}$ is that
$$\sm_{\bf p}[F\land G]\hbox{ is equivalent to }\sm_{\bf p}[F]\land G$$
whenever~$G$ does not contain intensional predicates (that is, predicate
constants from~{\bf p}).\footnote{See \cite[Section~5.1]{fer09}.}  For
instance, let~$\Pi^{-}_{\Delta,\Sigma}$ be the conjunction of~$\Pi_\Delta$ with the
unique name axioms from $T_{\Delta,\Sigma}$; then $\Pi_{\Delta,\Sigma}$ is
$\Pi^{-}_{\Delta,\Sigma}\land\i{DCA}$.  Since \i{DCA} does not contain intensional
predicates (recall that all atomic parts of \i{DCA} are equalities),
$\sm_{\bf p}[\Pi_{\Delta,\Sigma}]$ is equivalent to
$\sm_{\bf p}[\Pi^{-}_{\Delta,\Sigma}]\land\i{DCA}$.  The assertion of
Theorem~\ref{thm2} can be reformulated as follows: an interpretation~$I$ is a
model of $T_{\Delta,\Sigma}$ iff~$I$ is a stable model of $\Pi^{-}_{\Delta,\Sigma}$
that satisfies \i{DCA}.

As we have seen, the translation $\Pi_\Delta$ makes it
possible to generate models of~$T_\Delta$ using an answer set solver.
Unfortunately, the translation $\Pi_{\Delta,\Sigma}$ does not do the same for
relational theories with null values.  In the presence of null values we are
interested in non-Herbrand models (for instance, in the models of the theory
from the example above that satisfy $\omega=\i{Acme}$), but answer set
solvers are designed to generate Herbrand stable models only.  There is also a
more basic question: a Herbrand interpretation can be viewed as a set of
ground atomic formulas, but how will we describe non-Herbrand models by
syntactic expressions?  These questions are addressed in the next section.

\section{Calculating General Stable Models} 

\subsection{Diagrams} 

Consider a signature~$\sigma$ consisting of finitely many object and predicate
constants.  By $\i{HB}_\sigma$ we denote the Herbrand base of~$\sigma$, that is,
the set of its ground atomic formulas whose predicate is distinct from the
equality symbol.  By $\i{EHB}_\sigma$ (``extended'' Herbrand base) we denote the
set of all ground atomic formulas, including equalities between object
constants.  For any interpretation~$I$ of~$\sigma$ satisfying the domain
closure axiom ({\sl \i{DCA}-interpretation}, for short), by~$D(I)$ we will
denote the set of the formulas from~$\i{EHB}_\sigma$ that are true in~$I$.
This set
will be called the {\sl diagram} of~$I$.\footnote{This is essentially the
``positive diagram'' of~$I$, as this term is used in model
theory \cite[Section~2.1]{rob63}, for the special case of
\i{DCA\/}-interpretations.}

If a subset~$X$ of~$\i{EHB}_\sigma$ is the diagram of a \i{DCA\/}-interpretation
then it is clear that
\begin{itemize}
\item
the set of equalities in~$X$ is closed under reflexivity (it includes $a=a$ for
every object constant~$a$), symmetry (includes~$b=a$
whenever it includes~$a=b$), and transitivity (includes~$a=c$ whenever it
includes~$a=b$ and~$b=c$), and
\item
$X$ is closed under substitution: it includes $P(b_1,\dots,b_n)$ whenever it
includes $P(a_1,\dots,a_n)$, $a_1=b_1,\dots,a_n=b_n$.
\end{itemize}
The converse holds also:
\begin{thm}\label{thm3}
If a subset~$X$ of~$\i{EHB}_\sigma$ is closed under substitution, and the set of
equalities in~$X$ is closed under reflexivity, symmetry, and transitivity,
then there exists a \i{DCA}-interpretation~$I$ such that $D(I)=X$.
Furthermore, this interpretation is unique up to isomorphism.
\end{thm}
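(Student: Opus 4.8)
The plan is to construct~$I$ as a quotient of the set of object constants by the equivalence relation encoded in~$X$, and then to verify that~$X$ is exactly its diagram. First I would let~$\sim$ be the binary relation on the object constants of~$\sigma$ defined by $a\sim b$ iff the equality $a=b$ belongs to~$X$. By hypothesis the equalities in~$X$ are closed under reflexivity, symmetry, and transitivity, so~$\sim$ is an equivalence relation. I would take the universe of~$I$ to be the set of equivalence classes $[a]$, and interpret each object constant~$a$ as~$[a]$. Since every element of the universe has this form, $I$ satisfies \i{DCA}. For a predicate constant~$P$ of arity~$n$, I would declare $P([a_1],\dots,[a_n])$ to hold iff $P(a_1,\dots,a_n)\in X$.

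The step that uses the substitution hypothesis, and the only point that requires care, is checking that this interpretation of~$P$ is well defined: if $[a_i]=[b_i]$ for all~$i$, then each equality $a_i=b_i$ is in~$X$, and closure under substitution (applied in both directions, using symmetry) gives $P(a_1,\dots,a_n)\in X$ iff $P(b_1,\dots,b_n)\in X$. With~$I$ thus defined, I would verify $D(I)=X$ by inspecting the two kinds of ground atoms. An equality $a=b$ is true in~$I$ iff $[a]=[b]$ iff $a\sim b$ iff $a=b\in X$; a predicate atom $P(a_1,\dots,a_n)$ is true in~$I$ iff $P([a_1],\dots,[a_n])$ holds, which by construction holds iff $P(a_1,\dots,a_n)\in X$. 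Hence $D(I)=X$.

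For uniqueness up to isomorphism, suppose~$I$ and~$J$ are \i{DCA}-interpretations with $D(I)=D(J)=X$. I would define~$h$ on the universe of~$I$ by $h(a^I)=a^J$; this makes sense because every element of~$I$'s universe is of the form~$a^I$ for some object constant~$a$ (as~$I$ satisfies \i{DCA}). The map is well defined and injective because $a^I=b^I$ iff $(a=b)\in D(I)=D(J)$ iff $a^J=b^J$; it is onto because~$J$ satisfies \i{DCA}; and it respects each predicate because membership of $P(a_1,\dots,a_n)$ in the common diagram~$X$ records simultaneously its truth in~$I$ and its truth in~$J$. Thus~$h$ is an isomorphism.

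I expect no serious obstacle: the argument is a standard quotient-by-a-congruence construction, and the entire content of the hypotheses is used precisely to make the interpretation of the predicates well defined and to guarantee that reading truth off~$X$ agrees with truth in the quotient. The only thing to watch is the bookkeeping of equivalence classes and the symmetric use of the substitution closure.
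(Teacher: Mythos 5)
Your proposal is correct and follows essentially the same route as the paper's own proof: the quotient of the object constants by the equivalence relation read off from~$X$, predicates interpreted via membership of atoms in~$X$, and uniqueness via the map sending each constant's denotation in one model to its denotation in the other. You spell out the well-definedness check and the isomorphism verification in more detail than the paper does, but the construction and the key ideas are identical.
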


Since relational theories with null values include the domain closure
assumption, Theorem~\ref{thm3} shows that their models can be completely
characterized by diagrams.  In the example above,
the theory has~3 non-isomorphic models~$J_1$,~$J_2$,~$J_3$.  The diagram
of~$J_1$ consists of the formulas~(\ref{common}),~(\ref{nvexample}), and
$a=a$ for all object constants~$a$.  The diagrams of the other two are given
by the formulas
$$
\ba l
J_2=J_1\cup\{\i{SUPPLIES}(\i{Acme}, p_3),\i{SUPPLIES}(\omega,p_1),
\omega=\i{Acme},\i{Acme}=\omega\},\\
J_3=J_1\cup\{\i{SUPPLIES}(\i{Foo}, p_3),\i{SUPPLIES}(\omega,p_2),
\omega=\i{Foo},\i{Foo}=\omega\}.
\ea
$$

\subsection{Reducing Stable DCA-Models to Herbrand Stable Models}

The problem that we are interested in can be now stated as follows: Given
a first-order sentence~$F$, we would like to construct a first-order
sentence~$F'$ such that the diagrams of all \i{DCA\/}-interpretations
satisfying~$\sm_{\bf p}[F]$ can be easily extracted from the Herbrand
interpretations satisfying~$\sm_{\bf p}[F']$.  We say ``can be easily
extracted from'' rather than ``are identical to'' because diagrams include
equalities between object constants, and Herbrand models do not; occurrences
of equality in~$F$ will have to be replaced in~$F'$ by another symbol.
Our goal, in other words, is to define~$F'$ in such a way that diagrams of
the stable \i{DCA\/}-models of~$F$ will be nearly identical to Herbrand stable
models of~$F'$.

The examples of~$F$ that we are specifically interested in are the
formulas $\Pi^{-}_{\Delta,\Sigma}$, because stable \i{DCA\/}-models of that
sentence are identical to models of $T_{\Delta,\Sigma}$.
As a simpler example, consider
formula~(\ref{easy}). It has 3 minimal \i{DCA\/}-models, with the diagrams
\beq
\ba l
K_1=\{P(a), a=a, b=b\},\\
K_2=\{P(b), a=a, b=b\},\\
K_3=\{P(a), P(b), a=a, b=b, a=b, b=a\}.
\ea
\eeq{mineasy}
Our translation $F\mapsto F'$ will allow us to construct these diagrams
using ASP.

The solution described below uses the binary predicate constant~$\eq$, which is
assumed not to belong to~$\sigma$.  For any first-order formula~$F$ of the
signature~$\sigma$, $F^{=}_{\eqs}$ stands for the formula of the signature
$\sigma\cup\{\eq\}$ obtained from~$F$ by replacing each subformula of the form
$t_1=t_2$ with $\eq(t_1,t_2)$.  (Here~$t_1$,~$t_2$ are terms, that is, object
constants or object variables.)  The notation $X^{=}_{\eqs}$, where~$X$ is a set of
formulas of the signature~$\sigma$, is understood in a similar way.
By $E_{\sigma}$ we denote the conjunction of the logically valid sentences
$$
\ba c
\forall x (x=x).\\
\forall xy(x=y\rar y=x),\\
\forall xyz(x=y\land y=z\rar x=z),
\ea
$$
and
$$
\forall {\bf xy}(P({\bf x})\land {\bf x}={\bf y}\rar P({\bf y}))
$$
for all predicate constants~$P$ from~$\sigma$, where {\bf x}, {\bf y} are
disjoint tuples of distinct variables.

In the statement of the theorem below,~$F$ is an arbitrary sentence of the
signature~$\sigma$, and~{\bf p} stands for the set of all predicate constants
of~$\sigma$.

\begin{thm}\label{thm4}
For any \i{DCA}-interpretation~$I$ of the signature~$\sigma$ that satisfies
$\sm_{\bf p}[F]$, the Herbrand interpretation $D(I)^{=}_{\eqs}$ of the
signature~$\sigma\cup\{\eq\}$ satisfies
\beq
\sm_{\bf p}[(F\land E_{\sigma})^{=}_{\eqs}].
\eeq{res}
Conversely, any Herbrand model of this formula is $D(I)^{=}_{\eqs}$ for some
\i{DCA}-interpreta\-tion~$I$ of~$\sigma$ satisfying $\sm_{\bf p}[F]$.
\end{thm}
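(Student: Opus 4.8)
The plan is to route everything through Theorem~\ref{thm3}, which sets up a correspondence between the two sides. A Herbrand interpretation~$M$ of $\sigma\cup\{\eq\}$ satisfies $(E_\sigma)^{=}_{\eqs}$ exactly when, after reading $\eq$ back as equality, the associated subset of $\i{EHB}_\sigma$ is closed under substitution and has an equivalence relation as its equalities; by Theorem~\ref{thm3} these are precisely the sets $D(I)$ for \i{DCA}-interpretations~$I$ of~$\sigma$. So the first move is to record the bijection (up to isomorphism) between \i{DCA}-interpretations~$I$ and Herbrand models~$M=D(I)^{=}_{\eqs}$ of $(E_\sigma)^{=}_{\eqs}$; since $\sm_{\bf p}[F]$ is isomorphism-invariant, the ``unique up to isomorphism'' clause of Theorem~\ref{thm3} is harmless. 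For a fixed~$I$, let $h$ be the map sending each object constant~$a$ to $a^I$; it is surjective onto $|I|$ precisely because $I$ satisfies \i{DCA}, and by construction $M$ is the inverse-image structure, i.e.\ $P^M=h^{-1}(P^I)$ for each $P$ in~$\sigma$ and $\eq^M(a,b)$ iff $h(a)=h(b)$.

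The workhorse will be a homomorphism lemma: for every formula~$G$ over~$\sigma$ (allowed to contain predicate variables $\hat{\bf p}$), every object-variable valuation on the $M$-side and its $h$-image on the $I$-side, and every assignment~$V$ of relations on~$|I|$ to the $\hat{\bf p}$, one has $I\models G[V]$ iff $M\models G^{=}_{\eqs}[h^{-1}\!\circ V]$, where $h^{-1}\!\circ V$ sends each~$v$ to $h^{-1}(V(v))$. This is proved by induction on~$G$: the atomic cases for $P(\mathbf t)$, for equalities (which become $\eq$-atoms), and for predicate variables are immediate from the definition of~$M$; the connective cases are routine; and the quantifier case uses surjectivity of~$h$ so that ranging over all object constants in~$M$ matches ranging over all of~$|I|$. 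Applied with $G=F$ this gives $I\models F$ iff $M\models F^{=}_{\eqs}$, handling the first conjunct of $\sm_{\bf p}$.

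For the stability condition I would write $\sm_{\bf p}[H]$ explicitly as $H\land\neg\exists\hat{\bf p}\,(\hat{\bf p}<{\bf p}\land H^{*})$ with $H^{*}$ the operator of \cite{fer09}. Because $\cdot^{=}_{\eqs}$ only renames equality atoms, which $\cdot^{*}$ leaves untouched, the two operators commute, so $((F\land E_\sigma)^{=}_{\eqs})^{*}=(F^{*})^{=}_{\eqs}\land((E_\sigma)^{*})^{=}_{\eqs}$. Computing $(E_\sigma)^{*}$, the reflexivity, symmetry, and transitivity conjuncts are unchanged and free of $\hat{\bf p}$, while each substitution axiom contributes, besides its original $\hat{\bf p}$-free copy, the clause $\forall\mathbf{xy}(\hat P(\mathbf x)\land\eq(\mathbf x,\mathbf y)\rightarrow\hat P(\mathbf y))$. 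Since $M\models(E_\sigma)^{=}_{\eqs}$, the $\hat{\bf p}$-free parts hold in~$M$, so inside~$M$ the matrix $((E_\sigma)^{*})^{=}_{\eqs}$ is equivalent to the requirement that each~$\hat P$ respect~$\eq^M$. But a relation on the object constants respects~$\eq^M$ iff it equals $h^{-1}(S)$ for a (unique) relation~$S$ on~$|I|$; thus the second-order quantifier in~$M$, cut down by this matrix, ranges via $h^{-1}$ over exactly the assignments that $\exists\hat{\bf p}$ ranges over in~$I$. Under this bijection $\hat{\bf p}<{\bf p}$ transfers in both directions, because $h^{-1}$ is inclusion-preserving and (by surjectivity) inclusion-reflecting, $P^M=h^{-1}(P^I)$, and a witness to strictness on one side has a nonempty $h^{-1}$- or $h$-image on the other; and $(F^{*})^{=}_{\eqs}$ in~$M$ matches $F^{*}$ in~$I$ by the homomorphism lemma. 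Hence the existential holds in~$M$ iff it holds in~$I$, so the negated-existential conjunct holds in one iff in the other.

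Assembling the pieces gives both directions at once: $I\models\sm_{\bf p}[F]$ iff $I\models F$ and the $I$-existential fails, iff $M\models(F\land E_\sigma)^{=}_{\eqs}$ and the $M$-existential fails, iff $M\models\sm_{\bf p}[(F\land E_\sigma)^{=}_{\eqs}]$; for the converse one starts from an arbitrary Herbrand model~$M$ of~(\ref{res}), extracts~$I$ from $M\models(E_\sigma)^{=}_{\eqs}$ via Theorem~\ref{thm3}, and runs the same equivalences. I expect the main obstacle to be the second-order step of the third paragraph: pinning down that the $\cdot^{*}$-image of $E_\sigma$ restricts $\exists\hat{\bf p}$ to exactly the $\eq^M$-respecting relations, and that this restricted range corresponds order-isomorphically (so that the strict ``$<$'' is preserved, not merely ``$\le$'') to the unrestricted range over~$|I|$. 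The congruence bookkeeping there, together with the careful use of surjectivity of~$h$, is where the real work lies; the first-order induction of the homomorphism lemma is routine by comparison.
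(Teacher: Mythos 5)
Your proof is correct and follows essentially the same route as the paper's: the paper likewise handles the converse direction by extracting~$I$ from a Herbrand model of $(E_{\sigma})^{=}_{\eqs}$ via Theorem~\ref{thm3}, computes $E_{\sigma}^*({\bf v})$ to see that it contributes exactly the congruence constraints $\i{Sub}(v)$ on the quantified predicate variables, and transfers satisfaction between~$I$ and~$D(I)^{=}_{\eqs}$. The only difference is packaging: where you work semantically with the bijection $S\mapsto h^{-1}(S)$ between relations on the universe of~$I$ and $\eq$-respecting relations on the object constants, the paper defines a second-order version of the transformation $F\mapsto F^{=}_{\eqs}$ that restricts each quantifier $\exists v$ to $\i{Sub}(v)$, proves the transfer statement as Lemma~\ref{lem2} (with proof deferred to an online appendix), and then obtains the result by the syntactic manipulation $\sm_{\bf p}[(F\land E_{\sigma})^{=}_{\eqs}] \equiv \sm_{\bf p}[F]^{=}_{\eqs}\land (E_{\sigma})^{=}_{\eqs}$ --- so your argument in effect inlines the content of that lemma.
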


In other words, the transformation $I\mapsto D(I)^{=}_{\eqs}$ maps the class of
stable \i{DCA}-models of~$F$ onto the set of Herbrand stable models of
$(F\land E_{\sigma})^{=}_{\eqs}$.
The second part of Theorem~\ref{thm3} shows that this transformation
is one-to-one up to isomorphism.

By Theorem~2 from \cite{fer09}, formula~(\ref{res}) is equivalent to
\beq
\sm_{{\bf p},\eqs}[(F\land E_{\sigma})^{=}_{\eqs}\land\forall xy(\eq(x,y)\lor\neg \eq(x,y))].
\eeq{res1}
The advantage of this reformulation is that it treats all predicate
constants of the signature $\sigma\cup\{\eq\}$ as intensional.  This is
essential for our purposes,
because existing answer set solvers calculate Herbrand stable models
under the assumption that all predicate constants occurring in the program
(except for ``predefined predicates'') are intensional.

For example, the diagrams~(\ref{mineasy}) of the minimal {\sl DCA}-models
of~(\ref{easy}) are identical, modulo replacing $=$ with~$\eq$, to the
Herbrand stable models of the conjunction of the formulas~(\ref{easy}),
\beq
\ba c
\forall x \eq(x,x),\\
\forall xy(\eq(x,y)\rar \eq(y,x)),\\
\forall xyz(\eq(x,y)\land \eq(y,z)\rar \eq(x,z)),\\
\ea
\eeq{eq0}
$$
\forall xy(P(x)\land \eq(x,y)\rar P(y)),
$$
and
\beq
\forall xy(\eq(x,y)\lor\neg \eq(x,y)).
\eeq{choice}
In logic programming syntax, this conjunction can be written as
\begin{verbatim}
p(a)|p(b).
eq(X,X).
eq(X,Y) :- eq(Y,X).
eq(X,Z) :- eq(X,Y), eq(Y,Z).
p(Y) :- p(X), eq(X,Y).
{eq(X,Y)}.
\end{verbatim}
To make this program safe\footnote{Safety is a syntactic condition
required for ``intelligent instantiation''---part of
the process of generating answer sets.  In the program above, the rules
{\tt eq(X,X)} and {\tt \{eq(X,Y)\}} are unsafe.} we need to specify that the
only possible values of the variables {\tt X} and {\tt Y} are {\tt a} and
{\tt b}.  This can be accomplished by including the lines
\begin{verbatim}
u(a;b).
#domain u(X). #domain u(Y).
#hide u/1.
\end{verbatim}
(The auxiliary predicate symbol {\tt u} describes the ``universe'' of the
program.)  Now the program can be grounded by {\sc gringo}, and its Herbrand
stable models can be generated by {\sc claspD}.\footnote{{\sc gringo} and
{\sc claspD} are ``relatives''  of  {\sc clingo}; see Footnote ($^{\ref{ft}}$)
for a reference.
{\sc clingo} itself cannot be used in this case because the program is
disjunctive. \hbox{\sc cmodels}
({\tt http: //www.cs.utexas.edu/users/tag/cmodels.html}) would do as well.
Using the solver {\sc dlv} ({\tt http://www.dlvsystem.com}) will become an
option too after eliminating choice rules in favor of disjunctive rules with
auxiliary
predicates.  We are grateful to Yuliya Lierler for helping us
identify the software capable of executing this program.}  The output
\begin{verbatim}
Answer: 1
eq(b,b) eq(a,a) p(b)
Answer: 2
eq(b,b) eq(a,a) p(a)
Answer: 3
eq(b,b) eq(a,a) eq(b,a) eq(a,b) p(a) p(b)
\end{verbatim}
is essentially identical to the list~(\ref{mineasy}) of minimal models, as
could be expected on the basis of Theorem~\ref{thm4}.

The Python script {\sc nonH.py} (for ``non-Herbrand'') is a preprocessor
that turns a program~$F$ of a signature~$\sigma$ without function symbols of
arity $>0$, written in the input language of {\sc gringo}, into the program
$$(F\land E_{\sigma})^{=}_{\eqs}\land\forall xy(\eq(x,y)\lor\neg \eq(x,y)),$$
written in the language of {\sc gringo} also.  Thus the Herbrand stable models
of the
output of {\sc nonH.py} are the diagrams of the stable \i{DCA}-models of the
input (with equality replaced by $\eq$).  As in the example above, a ``universe''
predicate is used to ensure that whenever the input of {\sc nonH.py} is safe,
the output is safe also.  The diagrams of the minimal \i{DCA}-models of
formula~(\ref{easy}) can be generated by saving that formula, in the form
\begin{verbatim}
p(a)|p(b).
\end{verbatim}
in a file, say {\tt disjunction.lp}, and then executing the command
\begin{verbatim}
% nonH.py disjunction.lp | gringo | claspD 0
\end{verbatim}
(the {\sc claspD} option 0 instructs it to generate all answer sets, not
one).  The script can be downloaded from {\tt
http://www.cs.utexas.edu/users/fkyang/nonH/}.

\subsection{Calculating Models of a Relational Theory with Null Values}

The method applied above to the disjunction $P(a)\lor P(b)$ can be applied also
to the formula $\Pi^{-}_{\Delta,\Sigma}$. Stable \i{DCA}-models of this formula
can be generated using
{\sc clingo} with the preprocessor {\sc nonH.py}.  The preprocessor has two
options that can be useful here.  The command line

\medskip\noindent
{\tt \% nonH.py} $<$filename$>$ {\tt -una} $<$list of constants$>$

\medskip\noindent
instructs the preprocessor to conjoin its output with the unique name axioms
$a\neq b$ for all pairs $a$, $b$ of distinct object constants from the given
list.  The command line

\medskip\noindent
{\tt \% nonH.py} $<$filename$>$ {\tt -no-una} $<$list of constants$>$

\medskip\noindent
adds the unique name axioms $a\neq b$ for all pairs $a$, $b$ of distinct
object constants such that at least one of them does not occur in the given
list.  The diagrams of models of our relational theory with null values can
be generated by saving the rules
\begin{verbatim}
part(p1;p2;p3).
supplier(acme;foo;omega).
supplies(acme,p1;;foo,p2;;omega,p3).
subpart(p1,p2).
:- omega==p1.
:- omega==p2.
:- omega==p3.
\end{verbatim}
in a file, say {\tt db.lp}, and then executing the command
\begin{verbatim}
% nonH.py db.lp -no-una omega | clingo 0
\end{verbatim}
The following output will be produced:
\begin{verbatim}
Answer: 1
part(p1) part(p3) part(p2) supplier(acme) supplier(omega) supplier(foo)
supplies(omega,p3) supplies(foo,p2) supplies(acme,p1) subpart(p1,p2)
eq(omega,omega) eq(foo,foo) eq(acme,acme) eq(p3,p3) eq(p2,p2) eq(p1,p1)
eq(omega,foo) eq(foo,omega) supplies(omega,p2) supplies(foo,p3)
Answer: 2
part(p1) part(p3) part(p2) supplier(acme) supplier(omega) supplier(foo)
supplies(omega,p3) supplies(foo,p2) supplies(acme,p1) subpart(p1,p2)
eq(omega,omega) eq(foo,foo) eq(acme,acme) eq(p3,p3) eq(p2,p2) eq(p1,p1)
Answer: 3
part(p1) part(p3) part(p2) supplier(acme) supplier(omega) supplier(foo)
supplies(omega,p3) supplies(foo,p2) supplies(acme,p1) subpart(p1,p2)
eq(omega,omega) eq(foo,foo) eq(acme,acme) eq(p3,p3) eq(p2,p2) eq(p1,p1)
eq(omega,acme) eq(acme,omega) supplies(acme,p3) supplies(omega,p1)
\end{verbatim}
It is essentially identical to the set of diagrams $J_1,J_2,J_3$.

\subsection{Comparison with the Traylor---Gelfond Translation}

The approach to encoding relational theories with null values by logic
programs proposed in \cite{tra94} does not have the property established
for~$\Pi_{\Delta,\Sigma}$ in Theorem~\ref{thm2}: generally, there is no
1--1 correspondence between the models of~$T_{\Delta,\Sigma}$ and the answer
sets of the Traylor---Gelfond translation.  For instance, the logic programming
counterpart of our main example in the sense of
\cite{tra94} has~2 answer sets, not~3.
It uses strong (classical) negation \cite{gel91b}, and its
answer sets are incomplete sets of literals.  One of them, for instance,
includes $\i{SUPPLIES}(\i{Foo}, p_1)$ but does not include either of the two
complementary literals $\i{SUPPLIES}(\i{Foo}, p_3)$,
$\neg\i{SUPPLIES}(\i{Foo}, p_3)$.  This is how the program expresses the
possibility of $p_3$ being supplied by \i{Foo}, along with~$p_1$.
The result of \cite{tra94} describes the relation of~$T_{\Delta,\Sigma}$ to the
intersection of the answer sets of its logic programming counterpart, not to
the individual answer sets.

Logic programming counterparts in the sense of \cite{tra94}, like our
programs~$\Pi_{\Delta,\Sigma}$, can be turned into executable ASP code.  The
reason why that was not done in that paper is simply that the paper was written
too early---the first answer set solver appeared on the scene two years after
its publication \cite{nie96}.

\section{Proofs of Theorems} 

\subsection{Proofs of Theorems~\ref{thm1} and~\ref{thm2}}

\begin{lemma}\label{lem1}
For any finite set~$\Delta$ of positive ground clauses,
formula $\sm_{\bf p}[\Pi_\Delta]$ is equivalent to the conjunction of the
clauses~$\Delta$ and the completion axioms~(\ref{comp}).
\end{lemma}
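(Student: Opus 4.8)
The plan is to unwind the definition of the operator $\sm_{\bf p}$ from \cite{fer09} applied to the specific formula $\Pi_\Delta$ and to match the resulting minimality (stability) condition with the completion axioms~(\ref{comp}). Recall that $\sm_{\bf p}[F]$ stands for $F\land\neg\exists{\bf u}({\bf u}<{\bf p}\land F^*)$, where ${\bf u}$ is a tuple of fresh predicate variables matching the intensional predicates ${\bf p}$, ${\bf u}<{\bf p}$ says that ${\bf u}\le{\bf p}$ but not conversely, and $F^*$ is obtained from $F$ by the recursion that replaces each intensional atom $P({\bf a})$ with the corresponding $u_P({\bf a})$, leaves equalities untouched, and rewrites $(G\to H)^*$ as $(G^*\to H^*)\land(G\to H)$, so that $(\neg G)^*$ is $\neg G^*\land\neg G$. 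My first step is to observe that, read classically, $\Pi_\Delta$ is equivalent to the conjunction $\bigwedge\Delta$ of the clauses of~$\Delta$, because the ``choice'' conjuncts $A_i\lor\neg A_i$ coming from each rule~(\ref{pdrule}) are tautologies. Since $\sm_{\bf p}$ is not invariant under classical equivalence, however, these conjuncts must be retained when forming $\Pi_\Delta^*$; they are exactly what turns~(\ref{pdrule}) into a free choice rather than a definition.

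Next I would compute $\Pi_\Delta^*$. Each choice conjunct contributes $(A\lor\neg A)^*$, namely $\hat A\lor(\neg\hat A\land\neg A)$, where $\hat A$ denotes $A$ with its predicate replaced by the corresponding variable from ${\bf u}$; this simplifies propositionally to $\hat A\lor\neg A$. Each clause $A_1\lor\cdots\lor A_r$ of $\Delta$ contributes $\hat A_1\lor\cdots\lor\hat A_r$. Now fix an interpretation $I$ satisfying $\Pi_\Delta$ (equivalently, satisfying every clause of $\Delta$) and a tuple ${\bf u}\le{\bf p}$, as required by ${\bf u}<{\bf p}$. I claim such a ${\bf u}$ satisfies $\Pi_\Delta^*$ precisely when it contains every atom occurring in $\Delta$ that is true in $I$: the conjuncts $\hat A\lor\neg A$ force this, and, once it holds, the contribution $\hat A_1\lor\cdots\lor\hat A_r$ of each clause is automatic, since $I$ makes at least one $A_i$ of that clause true.

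The third step identifies the least such ${\bf u}$, call it ${\bf u}^{\min}$: writing $\bar{\bf a}$ for the value of ${\bf a}$ in $I$, the relation $u_P$ holds of a tuple $d$ exactly when $I\models P(d)$ and $d=\bar{\bf a}$ for some ${\bf a}\in W_P$ (recall $W_P$ is the set of constant tuples ${\bf a}$ with $P({\bf a})$ occurring in $\Delta$). Because every ${\bf u}\le{\bf p}$ satisfying $\Pi_\Delta^*$ lies in the interval $[{\bf u}^{\min},{\bf p}]$, a proper witness ${\bf u}<{\bf p}$ exists if and only if ${\bf u}^{\min}\ne{\bf p}$. As ${\bf u}^{\min}\le{\bf p}$ always holds, the equality ${\bf u}^{\min}={\bf p}$ amounts to the implication $I\models P(d)\Rightarrow\bigvee_{{\bf a}\in W_P}d=\bar{\bf a}$ for all $P$ and all $d$, which is exactly the completion axiom~(\ref{comp}) for $P$. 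Hence $I$ satisfies $\neg\exists{\bf u}({\bf u}<{\bf p}\land\Pi_\Delta^*)$ iff $I$ satisfies all the completion axioms, and combining this with the equivalence of $I\models\Pi_\Delta$ and $I\models\bigwedge\Delta$ yields the claim.

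The main obstacle I anticipate is bookkeeping around equality and denotations rather than any deep difficulty. Because the lemma is stated for arbitrary interpretations, distinct object constants may denote the same element, so ${\bf u}^{\min}$ and the completion condition must be phrased in terms of the values $\bar{\bf a}$, and this is precisely where the equalities ${\bf x}={\bf a}$ in~(\ref{comp}) enter. The one point to state carefully is that the tautologous choice conjuncts $A_i\lor\neg A_i$ cannot be discarded despite being classically trivial, since $\sm_{\bf p}$ distinguishes $A\lor\neg A$ (a choice) from $\top$; everything else is a routine verification of the two simplification steps and of the interval structure of the witnesses ${\bf u}$.
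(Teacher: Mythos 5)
Your proof is correct, but it takes a genuinely different route from the paper's. You unfold the second-order definition of $\sm_{\bf p}$ directly: you compute $\Pi_\Delta^*({\bf u})$, characterize the tuples ${\bf u}\le{\bf p}$ satisfying it (exactly those containing every atom of $\Delta$ that is true in~$I$), identify the least such witness ${\bf u}^{\min}$, and observe that the nonexistence of a proper witness, i.e.\ ${\bf u}^{\min}={\bf p}$, is precisely the conjunction of the completion axioms~(\ref{comp}) read semantically via the denotations $\bar{\bf a}$. The paper never touches the second-order machinery: it rewrites $\Pi_\Delta$ up to strong equivalence as the conjunction of the choice formulas~$C$ with the constraints $\neg\bigwedge_i\neg A_i$, invokes Theorem~3 of \cite{fer09} to pull those doubly negated constraints out of $\sm_{\bf p}$, rewrites~$C$ as a set of implications defining each predicate, and applies the completion theorem (Theorem~11 of \cite{fer09}) to conclude that $\sm_{\bf p}[C]$ is~(\ref{comp1}), which is classically equivalent to~(\ref{comp}). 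The trade-off: the paper's proof is shorter and reuses standard tools, so a reader familiar with \cite{fer09} verifies it instantly, while yours is self-contained, requires only the raw definition of $\sm_{\bf p}[F]$ as $F\land\neg\exists{\bf u}({\bf u}<{\bf p}\land F^*({\bf u}))$, and makes explicit the semantic mechanism---in particular it handles non-Herbrand interpretations, where distinct constants may share a denotation, which is exactly the generality the lemma needs for Theorem~\ref{thm2}. Your two cautionary remarks (that the tautologous choice conjuncts cannot be discarded inside $\sm_{\bf p}$, and that the minimal witness and the completion condition must be phrased in terms of values $\bar{\bf a}$ rather than constants) flag precisely the right pitfalls, and your simplification of $(A\lor\neg A)^*$ to $\hat A\lor\neg A$ is sound because the stability condition is evaluated classically.
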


\begin{proof}
Let~$C$ be the conjunction of the formulas
\beq
P({\bf a})\lor \neg P({\bf a})
\eeq{pl1}
for all atomic formulas $P({\bf a})$ occurring in~$\Delta$.  It is clear
that~$\Pi_\Delta$ is strongly equivalent\footnote{See \cite[Section~5]{fer09}.}
to the conjunction of~$C$ with the formulas
\beq
\neg\bigwedge_{i=1}^r\neg A_i
\eeq{co}
for all clauses~$A_1\lor\cdots\lor A_r$ from~$\Delta$.  According to
Theorem~3 from
~\cite{fer09}, it follows that $\sm_{\bf p}[\Pi_\Delta]$ is
equivalent to the conjunction of $\sm_{\bf p}[C]$ with formulas~(\ref{co}).
Furthermore,~(\ref{pl1}) is strongly equivalent to
$\neg\neg P({\bf a})\rar P({\bf a})$.  Consequently~$C$ is strongly
equivalent to the conjunction of the formulas
$$\forall {\bf x} \left\lbrack
   \bigvee_{{\bf a}\in W_P}\!\!(\neg\neg P({\bf x})\land {\bf x} = {\bf a})\;
      \rar\; P({\bf x})\right\rbrack$$
for all predicate constants~$P$.  By Theorem~11 from~\cite{fer09}, it follows
that $\sm_{\bf p}[C]$ is equivalent to
\beq
\forall {\bf x} \left\lbrack P({\bf x})\;\lrar\;
   \bigvee_{{\bf a}\in W_P}\!\!(\neg\neg P({\bf x})\land {\bf x} = {\bf a})
      \right\rbrack.
\eeq{comp1}
It remains to observe that~(\ref{co}) is equivalent to~$A_1\lor\cdots\lor A_r$,
and that~(\ref{comp1}) is equivalent to~(\ref{comp}).
\end{proof}

\medskip\noindent{\bf Theorem~\ref{thm1}.} {\it
For any set~$\Delta$ of positive ground clauses, a Herbrand
interpretation~$I$ is a model of~$T_\Delta$ iff $I$ is an answer set
of~$\Pi_\Delta$.}

\medskip\begin{proof}
A Herbrand interpretation is a model of~$T_\Delta$ iff it satisfies
the clauses~$\Delta$ and the completion axioms~(\ref{comp}).  On the other
hand, a Herbrand interpretation is an answer set of~$\Pi_\Delta$ iff it
satisfies $\sm_{\bf p}[\Pi_\Delta]$.  Consequently the assertion of the
theorem follows from Lemma~\ref{lem1}.
\end{proof}

\medskip\noindent{\bf Theorem~\ref{thm2}.} {\it
For any set~$\Delta$ of positive ground clauses and any set~$\Sigma$ of optional
unique name axioms, $T_{\Delta,\Sigma}$ is equivalent to
$\sm_{\bf p}[\Pi_{\Delta,\Sigma}]$, where~{\bf p} is the set of all predicate
constants.
}

\medskip\begin{proof}
Recall that $\Pi_{\Delta,\Sigma}$ is~$\Pi_\Delta\land\i{DCA}\land U$, where~$U$
is the conjunction of all unique name axioms from $T_{\Delta,\Sigma}$.
Since neither \i{DCA} nor $U$ contains intensional predicates,
$\sm_{\bf p}[\Pi_{\Delta,\Sigma}]$ is equivalent to
$\sm_{\bf p}[\Pi_\Delta]\land\i{DCA}\land U$.  By Lemma~\ref{lem1}, it
follows that $\sm_{\bf p}[\Pi_{\Delta,\Sigma}]$ is equivalent to
the conjunction of the clauses~$\Delta$, the completion axioms~(\ref{comp}),
and the formulas \i{DCA} and $U$; that is to say, it is equivalent
to~$T_{\Delta,\Sigma}$.
\end{proof}

\subsection{Proof of Theorem~\ref{thm3}}

\medskip\noindent{\bf Theorem~\ref{thm3}.} {\it
If a subset~$X$ of~$\i{EHB}_\sigma$ is closed under substitution, and the set of
equalities in~$X$ is closed under reflexivity, symmetry, and transitivity,
then there exists a \i{DCA}-interpretation~$I$ such that $D(I)=X$.
Furthermore, this interpretation is unique up to isomorphism.
}

\medskip\begin{proof}
The binary relation
\beq
a=b\;\hbox{ is in }\;X
\eeq{ft1}
between object constants~$a$,~$b$ is an equivalence relation on the set of
object constants.  For any predicate constant~$P$, the $n$-ary relation
\beq
P(a_1,\dots,a_n)\;\hbox{ is in }\;X
\eeq{ft2}
between object constants~$a_1,\dots,a_n$ can be extended to equivalence
classes of~(\ref{ft1}).  Consider the interpretation~$I$ such that
\begin{itemize}
\item
the universe of~$I$ is the set of equivalence classes of relation~(\ref{ft1}),
\item
$I$ interprets each object constant~$a$ as the equivalence class that
contains~$a$,
\item
$I$ interprets each predicate constant~$P$ as the extension of the corresponding
relation~(\ref{ft2}) to equivalence classes.
\end{itemize}
Interpretation~$I$ satisfies \i{DCA}, and~$D(I)=X$.

To prove the second claim, consider any \i{DCA}-interpretation~$J$
such that $D(J)=X$.  For any object constant~$a$, let~$f(a)$ be the
element of the universe of~$J$ that represents~$a$.  Function~$f$ can be
extended to equivalence classes of relation~(\ref{ft1}), and this
extension is an isomorphism between~$I$ and~$J$.
\end{proof}

\subsection{Proof of Theorem~\ref{thm4}}

The proof of Theorem~\ref{thm4} is based on the fact that a
\i{DCA}-interpretation~$I$ satisfies a first-order sentence~$F$ of the
signature~$\sigma$ iff the
Herbrand interpretation $D(I)^{=}_{\eqs}$ satisfies $F^{=}_{\eqs}$.  This is easy to
verify by induction on the size of~$F$.  What we need actually is a
similar proposition for second-order sentences, because the
formulas obtained by applying the operator $\sm_{\bf p}$ contain predicate
variables.  The straightforward generalization to second-order sentences is
invalid, however.  For instance, let~$F$ be the formula
\beq
\exists v(v(a)\land\neg v(b))
\eeq{neq}
($v$ is a unary predicate variable).
This formula is equivalent to $a\neq b$.  If the universe of an
interpretation~$I$ is a singleton then~$I$ does not satisfy~$F$.  On the other
hand, the result of replacing~$=$ with~$\eq$ in~$F$ is~$F$ itself,
because this formula does not contain equality.  It is satisfied by every
Herbrand interpretation, including~$D(I)^{=}_{\eqs}$.

To overcome this difficulty, we will define the transformation $F\mapsto F^{=}_{\eqs}$
for second-order sentences in such a way that it will involve, in addition
to replacing~$=$ with~$\eq$, restricting the second-order quantifiers in~$F$.

In this section, a {\sl second-order formula} is a formula that may involve
predicate variables, either free or existentially quantified, but not function
variables.  (An extension to universally quantified predicate variables is
straightforward, but it is not needed for our purposes.)  For any predicate
variable~$v$, $\i{Sub}(v)$ stands for the formula
$$\forall x_1\cdots x_ny_1\cdots y_n(v(x_1,\dots,x_n)
\land \eq(x_1,y_1)\land\cdots\land\eq(x_n,y_n)\rar v(y_1,\dots,y_n)),$$
where~$n$ is the arity of~$v$.
For any second-order formula~$F$ of the signature~$\sigma$, $F^{=}_{\eqs}$ stands
for the second-order formula of the signature $\sigma\cup\{\eq\}$ obtained
from~$F$ by
\begin{itemize}
\item
replacing each  subformula of the form $t_1=t_2$ with $\eq(t_1,t_2)$, and
\item
restricting each second-order quantifier $\exists v$ to $\i{Sub}(v)$.
\end{itemize}
For instance, is~$F$ is~(\ref{neq}) then $F^{=}_{\eqs}$ is
$$\exists v(\i{Sub}(v)\land v(a)\land\neg v(b)).$$
In application to first-order formulas, the notation $F^{=}_{\eqs}$ has the same
meaning as before.

\begin{lemma}\label{lem2}
A \i{DCA}-interpretation~$I$ satisfies a second-order sentence~$F$  of the
signature~$\sigma$  iff the Herbrand interpretation $D(I)^{=}_{\eqs}$ satisfies
$F^{=}_{\eqs}$.
\end{lemma}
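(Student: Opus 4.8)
The plan is to argue by structural induction on~$F$. Because the body~$G$ of a quantified subformula~$\exists v\,G$ is an open second-order formula, the induction must be carried out for a statement strong enough to cover formulas with free object and predicate variables. Write~$U$ for the universe of~$I$, and let~$\pi$ be the map sending each object constant~$a$ to the element~$a^I$ of~$U$ that it denotes; by \i{DCA} this map is onto. For a tuple ${\bf a}=(a_1,\dots,a_n)$ of object constants write $\pi({\bf a})=(\pi(a_1),\dots,\pi(a_n))$, and for a relation $R\subseteq U^n$ let its \emph{pullback} be $\pi^{-1}(R)=\{{\bf a}:\pi({\bf a})\in R\}$. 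Call an object-constant assignment~$\eta$ \emph{compatible} with an assignment~$\xi$ of elements of~$U$ to the free object variables if $\pi(\eta(x))=\xi(x)$ for every~$x$; by \i{DCA} every~$\xi$ has a compatible~$\eta$. The strengthened claim I would prove is that for every second-order formula~$F$, every~$\xi$, every compatible~$\eta$, and every assignment~$\rho$ of relations on~$U$ to the free predicate variables, $I$ satisfies~$F$ under~$\xi$ and~$\rho$ iff $D(I)^{=}_{\eqs}$ satisfies $F^{=}_{\eqs}$ under~$\eta$ and the assignment $v\mapsto\pi^{-1}(\rho(v))$. The lemma is the case in which~$F$ is a sentence.

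The base cases and the propositional and first-order cases are routine. For an atom $t_1=t_2$ both sides come down to the same condition $\pi(\eta(t_1))=\pi(\eta(t_2))$, since $\eq$ is interpreted in $D(I)^{=}_{\eqs}$ so that $\eq(a,b)$ holds exactly when the equality $a=b$ is in $D(I)$, that is, when $a^I=b^I$. For an atom $P({\bf t})$ with~$P$ a predicate constant, $P$ is interpreted in $D(I)^{=}_{\eqs}$ as $\pi^{-1}(P^I)$ directly from the definition of the diagram, so compatibility of~$\eta$ with~$\xi$ makes the two truth values coincide; the atom $v({\bf t})$ with~$v$ a predicate variable is handled the same way, using that~$\rho(v)$ is carried to its pullback. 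The connectives follow at once from the induction hypothesis. For $\exists x\,G$, a witness $u\in U$ on the side of~$I$ can be written as $u=\pi(a)$ for some object constant~$a$ (this is where \i{DCA} is used), and then $\eta[x\mapsto a]$ is compatible with $\xi[x\mapsto u]$; in the other direction one takes $u=\pi(a)$ for the Herbrand witness~$a$. In each case the induction hypothesis applies to~$G$.

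The one case with real content is the second-order quantifier $\exists v\,G$, whose translation is $\exists v(\i{Sub}(v)\land G^{=}_{\eqs})$. The crux is that pullback is a bijection from the relations on~$U^n$ onto the relations on object constants that satisfy~$\i{Sub}(v)$, i.e.\ that are invariant under~$\eq$. Every pullback $\pi^{-1}(R)$ is $\eq$-invariant by construction; conversely, since~$\pi$ is onto, any $\eq$-invariant relation~$R'$ descends to the relation~$R$ on~$U^n$ determined by declaring $R(\pi({\bf a}))$ to hold exactly when $R'({\bf a})$ holds, the choice of the representative tuple~${\bf a}$ being immaterial precisely by $\eq$-invariance, and then $\pi^{-1}(R)=R'$. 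Granting this bijection, a witness~$R$ for $\exists v\,G$ on the side of~$I$ yields the witness $\pi^{-1}(R)$, which satisfies~$\i{Sub}(v)$, on the Herbrand side, while a witness~$R'$ for $\i{Sub}(v)\land G^{=}_{\eqs}$ yields its descent~$R$ on the side of~$I$; in both directions the induction hypothesis applied to~$G$, with~$v$ adjoined to the predicate assignments as an~$R$/$\pi^{-1}(R)$ pair, closes the case. I expect this step to be the main obstacle: one has to check that the guard~$\i{Sub}(v)$ cuts the range of the existential predicate quantifier down to exactly the image of the pullback map, which is what repairs the failure of the naive translation witnessed by formula~(\ref{neq}).
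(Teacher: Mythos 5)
Your proposal is correct and follows exactly the route the paper intends: the paper itself relegates the details of Lemma~\ref{lem2} to an online appendix, but its surrounding discussion (induction on the structure of~$F$, with the guard $\i{Sub}(v)$ introduced precisely so that the second-order quantifier case goes through) is what you carry out, via a strengthened induction hypothesis over compatible object-constant assignments and pullback predicate assignments. The key step---that $\i{Sub}(v)$ characterizes exactly the $\eq$-invariant relations on object constants, which correspond bijectively (by surjectivity of the denotation map, i.e.\ by \i{DCA}) to relations on the universe of~$I$---is identified and verified correctly, and it is precisely what repairs the failure of the naive translation on formula~(\ref{neq}).
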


The proof of Lemma~\ref{lem2} is given in the online appendix.

\medskip
In the following lemma, as in the statement of Theorem~\ref{thm4},~$F$ is an
arbitrary sentence of the signature~$\sigma$, and~{\bf p} stands for the set
of all predicate constants of~$\sigma$.

\begin{lemma}\label{lem3}
For any \i{DCA}-interpretation~$I$ of the signature~$\sigma$,
$$I\models\sm_{\bf p}[F]\quad\hbox{iff}\quad D(I)^{=}_{\eqs}\models
\sm_{\bf p}[(F\land E_{\sigma})^{=}_{\eqs}].
$$
\end{lemma}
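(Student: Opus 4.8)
The plan is to deduce Lemma~\ref{lem3} from Lemma~\ref{lem2} by a purely syntactic manipulation. Since $\sm_{\bf p}[F]$ is itself a second-order sentence of the signature~$\sigma$, Lemma~\ref{lem2} applies to it and gives
$$I\models\sm_{\bf p}[F]\quad\hbox{iff}\quad D(I)^{=}_{\eqs}\models(\sm_{\bf p}[F])^{=}_{\eqs}.$$
It is therefore enough to show that the two second-order sentences $(\sm_{\bf p}[F])^{=}_{\eqs}$ and $\sm_{\bf p}[(F\land E_{\sigma})^{=}_{\eqs}]$ are satisfied by exactly the same interpretations of the form $D(I)^{=}_{\eqs}$; in fact I would show the stronger statement that they are equivalent on every Herbrand interpretation of $\sigma\cup\{\eq\}$ satisfying $(E_{\sigma})^{=}_{\eqs}$. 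The role of $E_{\sigma}$ in the theorem is precisely to account for the discrepancy between applying $\sm_{\bf p}$ before and after the equality-replacement, and the restriction of the second-order quantifiers to $\i{Sub}(v)$ built into the second-order version of $(\cdot)^{=}_{\eqs}$ is what matches this discrepancy.

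First I would recall from \cite{fer09} that $\sm_{\bf p}[F]$ abbreviates $F\land\neg\exists{\bf u}(({\bf u}<{\bf p})\land F^*({\bf u}))$, where ${\bf u}$ is a tuple of predicate variables matching~${\bf p}$ and $F^*({\bf u})$ is the associated reduct (intensional atoms replaced by the variables~${\bf u}$, implications doubled). Two easy inductions then do most of the bookkeeping. The first shows that the reduct and the equality-replacement commute, $\big(F^*({\bf u})\big)^{=}_{\eqs}=\big(F^{=}_{\eqs}\big)^*({\bf u})$, because equality atoms and, here, the extensional predicate~$\eq$ are untouched by $(\cdot)^*$, while intensional atoms are untouched by $(\cdot)^{=}_{\eqs}$; note also that ${\bf u}<{\bf p}$ contains no equality, so it is left unchanged. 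Applying the second-order $(\cdot)^{=}_{\eqs}$ to $\sm_{\bf p}[F]$ then yields
$$(\sm_{\bf p}[F])^{=}_{\eqs}\;=\;F^{=}_{\eqs}\land\neg\exists{\bf u}\Big(\textstyle\bigwedge_i\i{Sub}(u_i)\land({\bf u}<{\bf p})\land\big(F^{=}_{\eqs}\big)^*({\bf u})\Big),$$
the conjunction $\bigwedge_i\i{Sub}(u_i)$ coming from the prescribed restriction of each quantifier $\exists u_i$ (which applies even under the negation).

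The second induction is the key calculation: I would compute the reduct of $(E_{\sigma})^{=}_{\eqs}$. Its reflexivity, symmetry, and transitivity conjuncts contain no intensional predicate, so $(\cdot)^*$ leaves them unchanged; the substitution conjunct for~$p_i$, namely $\forall{\bf xy}(p_i({\bf x})\land\eq({\bf x},{\bf y})\rar p_i({\bf y}))$, has an intensional atom in both antecedent and consequent, and its reduct works out to $\i{Sub}(u_i)\land\forall{\bf xy}(p_i({\bf x})\land\eq({\bf x},{\bf y})\rar p_i({\bf y}))$. Hence $\big((E_{\sigma})^{=}_{\eqs}\big)^*({\bf u})$ is, up to rearranging conjuncts, $\bigwedge_i\i{Sub}(u_i)\land(E_{\sigma})^{=}_{\eqs}$. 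Unfolding $\sm_{\bf p}[(F\land E_{\sigma})^{=}_{\eqs}]$, using $(F\land E_{\sigma})^{=}_{\eqs}=F^{=}_{\eqs}\land(E_{\sigma})^{=}_{\eqs}$ and the distribution of $(\cdot)^*$ over conjunction, and pulling the ${\bf u}$-free conjunct $(E_{\sigma})^{=}_{\eqs}$ out of the existential, I would obtain a formula that, on any interpretation satisfying $(E_{\sigma})^{=}_{\eqs}$, collapses to exactly the displayed form of $(\sm_{\bf p}[F])^{=}_{\eqs}$. Finally, every diagram $D(I)^{=}_{\eqs}$ satisfies $(E_{\sigma})^{=}_{\eqs}$: this is precisely the closure of a diagram under reflexivity, symmetry, transitivity, and substitution noted before Theorem~\ref{thm3}. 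Combining this with the instance of Lemma~\ref{lem2} above gives Lemma~\ref{lem3}.

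The part I expect to require the most care is the reduct computation for the substitution axioms together with the bookkeeping of which predicates are intensional: $\eq$ must be treated as extensional in both occurrences of $\sm_{\bf p}$ (the subscript is ${\bf p}$, the predicates of~$\sigma$), so that $\eq$ is never replaced by a variable~$u_i$, and it is exactly this that makes $\i{Sub}(u_i)$ emerge from $\big((E_{\sigma})^{=}_{\eqs}\big)^*$ and match the quantifier restriction produced by the second-order $(\cdot)^{=}_{\eqs}$. A secondary point to verify is that Lemma~\ref{lem2} legitimately applies to $\sm_{\bf p}[F]$, i.e. that it is a second-order sentence of~$\sigma$ of the kind covered there (only existentially quantified predicate variables, no function variables), which it is.
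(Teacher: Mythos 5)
Your proposal is correct and follows essentially the same route as the paper's own proof: both unfold $\sm_{\bf p}$ into $F\land\neg\exists{\bf v}(({\bf v}<{\bf p})\land F^*({\bf v}))$, compute that the reduct of the substitution axioms in $(E_\sigma)^{=}_{\eqs}$ yields exactly the $\i{Sub}(v)$ quantifier restrictions built into the second-order $(\cdot)^{=}_{\eqs}$, factor the ${\bf v}$-free conjunct $(E_\sigma)^{=}_{\eqs}$ out of the existential, and then invoke Lemma~\ref{lem2} on the second-order sentence $\sm_{\bf p}[F]$ together with the fact that every $D(I)^{=}_{\eqs}$ satisfies $(E_\sigma)^{=}_{\eqs}$. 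The only difference is presentational (you apply Lemma~\ref{lem2} first and then prove the conditional equivalence of the two translations, whereas the paper rewrites $\sm_{\bf p}[(F\land E_\sigma)^{=}_{\eqs}]$ as $\sm_{\bf p}[F]^{=}_{\eqs}\land(E_\sigma)^{=}_{\eqs}$ before applying the lemma), and your explicit check that the reduct commutes with the equality-replacement makes precise a step the paper leaves implicit in its notation.
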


\begin{proof}
Recall that $\sm_{\bf p}[F]$ is defined as
$$
   F \land \neg \exists {\bf v} (({\bf v}<{\bf p}) \land F^*({\bf v}))
$$
\cite[Section~2.3]{fer09}, so that $\sm_{\bf p}[(F\land E_{\sigma})^{=}_{\eqs}]$ is
\beq
 F^{=}_{\eqs}\land (E_{\sigma})^{=}_{\eqs} \land\neg\exists {\bf v}(({\bf v}<{\bf p})
     \land F^*({\bf v})^{=}_{\eqs}\land  E_{\sigma}^*({\bf v})^{=}_{\eqs}).
\eeq{lem3a}
From the definitions of $ E_{\sigma}$ and of the
transformation $F\mapsto F^*({\bf v})$ \cite[Section~2.3]{fer09} we see that
$ E_{\sigma}^*({\bf v})$ is the conjunction of $ E_{\sigma}$ and the formulas
$$
\forall {\bf xy}(v({\bf x})\land {\bf x}={\bf y}\rar v({\bf y}))
$$
for all members~$v$ of tuple~{\bf v}.  Consequently
$ E_{\sigma}^*({\bf v})^{=}_{\eqs}$ is the conjunction of $ (E_{\sigma})^{=}_{\eqs}$ and the formulas
$\i{Sub}(v)$ for all members~$v$ of tuple {\bf v}.  It follows
that~(\ref{lem3a}) can be written as
$$
 F^{=}_{\eqs}\land (E_{\sigma})^{=}_{\eqs}
\land\neg\exists {\bf v}\left(({\bf v}<{\bf p})
   \land F^*({\bf v})^{=}_{\eqs}\land  (E_{\sigma})^{=}_{\eqs} \land \bigwedge_v\i{Sub}(v)\right).
$$
This formula is equivalent to
$$
 F^{=}_{\eqs} \land\neg\exists {\bf v}\left(\bigwedge_v\i{Sub}(v)
\land (({\bf v}<{\bf p}) \land F^*({\bf v}))^{=}_{\eqs}\right) \land (E_{\sigma})^{=}_{\eqs},
$$
which can be written as
$$\sm_{\bf p}[F]^{=}_{\eqs}\land (E_{\sigma})^{=}_{\eqs}.$$
The interpretation~$D(I)^{=}_{\eqs}$ satisfies the second conjunctive term.  By
Lemma~\ref{lem2},~$D(I)^{=}_{\eqs}$ satisfies the first conjunctive term
iff~$I$ satisfies~$\sm_{\bf p}[F]$.
\end{proof}

\medskip\noindent{\bf Theorem \ref{thm4}.} {\it
For any \i{DCA}-interpretation~$I$ of the signature~$\sigma$ that satisfies
$\sm_{\bf p}[F]$, the Herbrand interpretation $D(I)^{=}_{\eqs}$ of the
signature~$\sigma\cup\{\eq\}$ satisfies
$$
\sm_{\bf p}[(F\land E_{\sigma})^{=}_{\eqs}].
$$
Conversely, any Herbrand model of this formula is $D(I)^{=}_{\eqs}$ for some
\i{DCA}-interpreta\-tion~$I$ of~$\sigma$ satisfying $\sm_{\bf p}[F]$.
}

\medskip\begin{proof}
The first assertion is identical to the only-if part of Lemma~\ref{lem3}.
To prove the second assertion, consider a Herbrand model~$J$ of
$\sm_{\bf p}[(F\land E_{\sigma})^{=}_{\eqs}]$.  Since this formula entails
$ (E_{\sigma})^{=}_{\eqs}$,~$J$ is a model of~$ (E_{\sigma})^{=}_{\eqs}$ as well.
It follows that
the subset~$X$ of $\i{EDB}_\sigma$ such that $X^{=}_{\eqs}=J$ is closed under
substitution, and the set of equalities in~$X$ is closed under reflexivity,
symmetry, and transitivity.  By Theorem~\ref{thm3}, there exists a
\i{DCA}-interpretation~$I$ such that $D(I)=X$, so that $D(I)^{=}_{\eqs}=J$.
By the if part of Lemma~\ref{lem3},~$I$ satisfies $\sm_{\bf p}[F]$.
\end{proof}

\section{Conclusion}

This paper contributes to the direction of research on the semantics of null
values started in \cite{rei84} and \cite{tra94}.
More recently, null values were studied in the framework of the
Datalog$+/-$ project \cite{got10}.

We have demonstrated a close
relationship between Reiter's semantics of disjunctive databases and
cardinality constraints in answer set programming.  It shows also how answer
set solvers can be used for computing models of relational theories with
null values.

On the other hand, this paper improves our understanding of the role of
non-Herbrand stable models.  Are they
merely a mathematical curiosity, or can they have serious applications to
knowledge representation?  We have provided arguments in favor of the
usefulness of this generalization of the stable model semantics by showing,
first, how non-Herbrand stable models can serve for representing null values,
and second, how they can be generated using existing software systems.

The generalization of the stable model semantics proposed in \cite{fer09}
extends the original definition of a stable model in two ways: syntactically
(it is applicable to arbitrary first-order formulas) and semantically (a
stable model can be non-Herbrand).  The preprocessor {\sc f2lp} \cite{lee09a}
allows us to use existing answer set solvers for generating stable models
of some syntactically complex formulas.  On the other hand, the preprocessor
{\sc nonH.py}, described in this paper, allows us to use answer set solvers
for generating some non-Herbrand stable models---those that satisfy the
domain closure assumption but not the unique name assumption.  The two
programs can be used together.  For instance, the stable \i{DCA}-models of the
formula
$$(P(a) \land P(b)) \lor (P(c) \land P(d))$$
(there are 23 of them) can be generated by running {\sc f2lp} on the file
\begin{verbatim}
(p(a) & p(b)) | (p(c) & p(d)).
\end{verbatim}
and then running consecutively {\sc nonH.py}, {\sc gringo}, and {\sc claspD}.

\section*{Acknowledgements}

Many thanks to Marc Denecker, Michael Gelfond, and Yuliya Lierler for
useful discussions related to the topic of this paper, and to the
anonymous referees for valuable advice.

\bibliographystyle{acmtrans}
\bibliography{bib}
\end{document}